\documentclass[review]{elsarticle}

\usepackage[left=3cm, right=3cm, top=3cm]{geometry}
\usepackage{lineno,hyperref,amsmath,amsthm,esvect}

\modulolinenumbers[5]

\journal{International Journal of Computer Mathematics: Computer Systems Theory}

\newtheorem{theorem}{Theorem}[section]
\newtheorem{lemma}{Lemma}[section]

\graphicspath{ {Figures/} }

\begin{document}

\begin{frontmatter}

\title{\bf\large A sufficient condition for visibility paths in simple polygons}

\author[mymainaddress]{Mohammad Reza Zarrabi\corref{mycorrespondingauthor}}
\cortext[mycorrespondingauthor]{Corresponding author}
\ead{m.zarabi@modares.ac.ir}

\author[mymainaddress]{Nasrollah Moghaddam Charkari}
\ead{charkari@modares.ac.ir}
\address[mymainaddress]{Faculty of Electrical Engineering and Computer Science, Tarbiat Modares University, Tehran, Iran}

\begin{abstract}
The purpose of this paper is to give a simple proof for a necessary and sufficient condition for visibility paths
in simple polygons. A visibility path is a curve such that every point inside
a simple polygon is visible from at least one point on the path.
This result is essential for finding the shortest watchman route inside a simple polygon
specially when the route is restricted to curved paths.
\end{abstract}

\begin{keyword}
\emph{Computational Geometry; Visibility Paths; Watchman Route; Simple Polygons;}
\end{keyword}

\end{frontmatter}

\section{Introduction}
Visibility coverage of simple polygons with a mobile guard (mainly known as \emph{watchman} problems)
is a central problem in computational geometry.
Usually, a mobile guard is defined as a moving point inside a simple polygon $P$
that sees in any direction for any distance.
Also, visibility is defined by the clearance of straight line between two points inside $P$.
In other words, two points inside $P$ see each other if the line segment connecting them does not intersect
the boundary of $P$.
Consider the curved path created by the mobile guard during its walk inside $P$.
Coverage is achieved if every point inside $P$ is visible from at least one point on the path.
In this case, the path is called a \emph{visibility path}.

The \emph{watchman route problem} (WRP) asks for finding a visibility route
(a visibility path whose starting and end points coincide)
inside $P$ of minimum length.
The WRP is defined for two versions. The \emph{anchored} WRP, in which the tour is required to pass through a
specified anchor point, and the \emph{floating} WRP, in which no anchor point is specified.
The problem has a polynomial time solution for both versions.
The anchored WRP was first studied by Chin and Ntafos \cite {Chin_1991}.
Afterwards, many researchers have worked on this problem in
\cite {Tan_Hirata_Inagaki_1993}, \cite {Tan_Hirata_1993}, \cite {Hammar_1997}, \cite {Tan_Hirata_Inagaki_1999} and
\cite {Dror_2003}, respectively, to improve the results.
The floating WRP was investigated in \cite {Carlsson_1993, Dror_2003}.
The best known results for both versions were presented by Dror et al. \cite {Dror_2003}.
Linear time algorithms are known for approximating
the anchored WRP \cite {Tan_2004} and floating WRP \cite {Tan_2007}
(also, a non-linear time algorithm in \cite {Nilsson_2001}).
All the above solutions for the WRP used the following theorem (special case of the theorem for visibility routes):

\begin{theorem}\label{Main}
A curved path inside $P$ is a visibility path, iff it intersects all nonredundant cuts of $P$.
\end{theorem}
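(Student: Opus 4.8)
The plan is to prove the two implications separately, each by contraposition. I read ``$\pi$ intersects the cut $C$'' in the standard form ``$\pi$ meets the closed subpolygon $\overline{P_C}$ into which $C$ divides $P$, on the side carrying the edge that generates $C$''; equivalently, ``$\pi$ misses $C$'' means ``$\pi$ is contained in the open complementary pocket $\mathrm{Pk}(C):=P\setminus\overline{P_C}$''. I use that $C$ is nonredundant exactly when $\mathrm{Pk}(C)$ is inclusion-maximal among all pockets, so that meeting every nonredundant cut already forces meeting every cut.

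\emph{Necessity.} Suppose $\pi$ misses a nonredundant cut $C$, generated at a reflex vertex $r$ by extending an incident edge $e$; then $\pi\subseteq\mathrm{Pk}(C)$. I would exhibit a point $q\in\overline{P_C}$ invisible from every point of $\mathrm{Pk}(C)$ --- a point lying just inside $P_C$ near the endpoint of $e$ farthest from $r$ is the natural candidate, since a sightline from $\mathrm{Pk}(C)$ to such a $q$ is forced to cross $C$ and then to skirt $e$, which the reflex turn at $r$ and the fact that $C$ ends on $\partial P$ prevent inside $P$. As $\pi\subseteq\mathrm{Pk}(C)$, the point $q$ is seen by no point of $\pi$, so $\pi$ is not a visibility path.

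\emph{Sufficiency.} Suppose $\pi$ is not a visibility path; pick $q\in P$ seen by no point of $\pi$ and let $V(q)$ be its visibility polygon, so $\pi\cap V(q)=\emptyset$. Being connected, $\pi$ lies in a single connected component $K$ of $P\setminus V(q)$, and $K$ is a pocket of the star-shaped set $V(q)$: the closure of $K$ meets $V(q)$ in a single chord (window) $w$ that issues from a reflex vertex $r$ of $P$, is collinear with $qr$, and has $r$ strictly between $q$ and its far endpoint. The key step is the containment $K\subseteq\mathrm{Pk}(C)$ for one of the (at most two) cuts $C$ at $r$. To prove it: the direction from $r$ along $w$ is an interior direction at $r$ whose opposite (the direction from $r$ toward $q$) is also interior, and a short angular computation shows that the set of all such directions is precisely the union of the two cones that point, from $r$, into the pockets $\mathrm{Pk}(C_1),\mathrm{Pk}(C_2)$ of the two cuts at $r$. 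Hence $w$ enters one of these, say $\mathrm{Pk}(C_1)$; being a straight chord issuing from an endpoint of $C_1$, $w$ never re-crosses $C_1$ and so stays inside $\overline{\mathrm{Pk}(C_1)}$ up to $\partial P$; since $q$ then lies on the opposite side of $w$ (note $q\notin\mathrm{Pk}(C_1)$, for otherwise the segment $qr$ would have to reach $r$ from within $\overline{\mathrm{Pk}(C_1)}$), the pocket $K$ that $w$ cuts off is contained in $\mathrm{Pk}(C_1)$. Finally, climbing the finite inclusion chain of pockets above $\mathrm{Pk}(C_1)$ to a maximal one yields a nonredundant cut $C^{\ast}$ with $\pi\subseteq K\subseteq\mathrm{Pk}(C_1)\subseteq\mathrm{Pk}(C^{\ast})$, i.e.\ $\pi$ misses $C^{\ast}$, contradicting the hypothesis.

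\emph{Main obstacle.} The crux is the angular/containment lemma in the sufficiency direction: one must determine exactly which window directions can occur at a reflex vertex, line them up with the two cut directions there, and then transport the statement from cones of directions at $r$ to the global regions $K$ and $\mathrm{Pk}(C)$ --- in particular verifying that $K$ cannot spill across the chord $C_1$. The existence of the invisible witness in the necessity direction is of the same local flavour. Most of the remaining care goes into degeneracies: cuts through polygon vertices, edges collinear with a cut, coincident cuts, and star-shaped $P$, where the pockets of the nonredundant cuts meet exactly in the kernel and the statement still holds.
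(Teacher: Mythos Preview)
Your sufficiency argument is sound and follows a genuinely different route from the paper's. The paper works forward and constructively: it first reduces to seeing $\partial P$ (Lemma~3.1), then for each nonredundant cut $C$ it builds the funnel from every vertex $x$ of $Pocket(C)$ to $C$, and by a case analysis on where $\Pi$ crosses $C$ (together with auxiliary cuts $C_{L_x},C_{R_x}$ that $\Pi$ is forced to meet) concludes that $x$ is seen (Lemma~3.2); a separate redundancy argument then handles the vertices of the residual core polygon $Q$ (Lemma~3.3). You instead run the contrapositive globally: from an unseen point $q$ you trap $\Pi$ behind a single window of $V(q)$, pin that window into one of the two cut half-polygons at the blocking reflex vertex by the angular computation, and then climb the pocket inclusion order to a nonredundant cut that $\Pi$ misses. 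Your route is shorter, uses only the standard structure of visibility polygons, and avoids the two-tier vertex case split; the paper's route, in exchange, is more explicit about \emph{which} point of $\Pi$ sees each vertex, which is closer to how watchman-route algorithms actually exploit the cuts.

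One caution on the necessity side. The paper reads ``$\Pi$ intersects $C$'' literally as ``$\Pi$ meets the chord $C$'' --- the proof of Lemma~3.2 uses the actual intersection point of $\Pi$ with the segment $C$. Under that reading, ``$\Pi$ misses $C$'' bifurcates: either $\Pi\subseteq\mathrm{Pk}(C)$ (your case), or $\Pi$ lies in the open interior of $P_C$ on the generating-edge side. Your necessity paragraph only treats the first case; in the second, your proposed witness near the far endpoint of $e$ sits on the \emph{same} side as $\Pi$ and may well be visible from it, so a different witness (in the opposite half-polygon) is needed. Since your one-sided reading weakens the sufficiency hypothesis, your sufficiency still implies the paper's; but your necessity, as written, is strictly weaker than the paper's stated necessity. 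The paper itself, however, treats necessity only by the one-line remark that ``some corners will not be seen,'' so this is a caveat about matching the statement rather than a flaw in your overall plan.
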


Although this theorem is essential for the WRP,
only its necessary condition is discussed in some of the above work
(the theorem was first mentioned in \cite {Chin_1991} without proof,
other work mentioned and used this theorem, but again without proving it), i.e.,
any visibility paths must intersect all nonredundant cuts, since otherwise,
some corners of $P$ will not be seen by the path.
On the other hand, this theorem cannot directly be obtained from the definition of nonredundant cuts in \cite {Das_1997}
(there are some vertices that might not be seen by the points on the path intersecting nonredundant cuts).
Thus, in this note, we prove that it is indeed also a sufficient condition.
Theorem~\ref{Main} would be helpful for variants of the WRP when the route is restricted to curved paths.
For example, a piece-wise route consisting of \emph{conic} sections (Figure~\ref{fig:Nonredundant}).
In the standard WRP, we are looking for the shortest \emph{polygonal} visibility route.

\begin{figure}
	\centering
	\includegraphics[width=7cm,keepaspectratio=true]{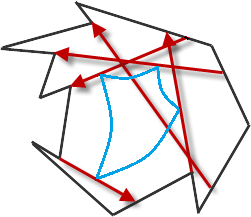}
	\caption{A conic route \cite {Garcia_2000} (blue) and nonredundant cuts (red)}
	\label{fig:Nonredundant}
\end{figure}

\section{Preliminaries}
Let $P$ be a simple polygon and let $\partial P$ denote the boundary of $P$.
Without loss of generality, assume that the vertices of $P$ are sorted in \emph{clockwise} order and
the whole $P$ is not visible from any interior point, i.e., $P$ is not a \emph{star-shaped} polygon.
A vertex of $P$ is \emph{reflex} or \emph{convex} depending on whether its
internal angle is strictly larger than $\pi$ or not, respectively.

Let $v$ be a reflex vertex, $u$ a vertex adjacent to $v$, and $w$ the closest point to $v$
on $\partial P$ hit by the half line originating at $v$ along $\overline{uv}$.
We call the line segment $C=\overline{vw}$ a \emph{cut} of $P$ \emph{generated} by $\overline{uv}$
(see \cite {Alsuwaiyel_1995, Zarrabi_2020}).
Each reflex vertex gives rise to two cuts.
Since $C$ partitions the polygon into two portions, we define $Pocket(C)$ to be the portion of $P$ that includes $u$
($\overline{uv}$ is the \emph{generator} of $Pocket(C=\overline{vw})$).
We also associate a direction to each cut $C$ such that
$Pocket(C)$ entirely lies on the right of $C$ (right hand rule). This direction is compatible with the
clockwise ordering of vertices of $P$.

If $C$ and $C'$ are two cuts such that $Pocket(C) \subset Pocket(C')$,
then $C'$ is called \emph{redundant}, otherwise it is \emph{nonredundant} \cite {Das_1997}.
The nonredundant cuts are termed the \emph{essential} cuts in \cite {Tan_2007}, where
a well-known linear time algorithm for computing essential cuts was presented.
We will be working mainly with the nonredundant cuts of $P$
(see Figure~\ref{fig:Nonredundant}).

A \emph{visibility path} for a given $P$ is
a connected path (possibly curved) contained in $P$ with the property that
every point inside $P$ is visible from at least one point on the path
(remember that as a continuous image of [0,1], a curved path is connected).

We define $\pi_E(x,y)$ to be the shortest Euclidean path from a point $x$ to a point $y$ inside $P$.
Similarly, $\pi_E(x,\Pi)$ is defined as a shortest Euclidean path from a point $x$ to a curved path $\Pi$ inside $P$.
Note that $\pi_E(x,\Pi)$ might not be unique, but if $x$ is not visible from $\Pi$, the first segment of $\pi_E(x,\Pi)$
will be unique.

\section{Main result}
In this section, we prove the sufficient condition of Theorem~\ref{Main}.
Let $\Pi$ be a curved path inside $P$ such that it intersects all nonredundant cuts of $P$.
We will show that $\Pi$ is a visibility path:

\begin{proof}

\begin{lemma}\label{Lemma_One}
If a curved path $\Pi$ inside $P$ sees all of $\partial P$, then it sees all of $P$.
\end{lemma}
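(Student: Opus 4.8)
The plan is to argue by contradiction: suppose $\Pi$ sees every point of $\partial P$ but misses some interior point $q$. Since $q$ is not visible from $\Pi$, the shortest path $\pi_E(q,\Pi)$ has a well-defined first segment $\overline{qr}$ (as noted in the preliminaries), where $r$ is a reflex vertex of $P$ (the shortest path inside a simple polygon is a polygonal chain whose interior vertices are reflex vertices of $P$). Extend the segment $\overline{qr}$ past $q$ in the direction from $r$ to $q$ until it first meets $\partial P$ at a point $p$; so $p$, $q$, $r$ are collinear with $q$ between $p$ and $r$, and the open segment $\overline{pr}$ lies inside $P$ and contains no vertex of $P$ other than possibly $r$.

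First I would observe that $p \in \partial P$, so by hypothesis $p$ is seen by some point $s \in \Pi$; let $\sigma = \overline{ps}$ be that visibility segment, which lies inside $P$. The key geometric step is to compare the segment $\sigma$ with the barrier segment $\overline{qr}$. Since $r$ is the first turning point of the shortest path from $q$ to $\Pi$, and since $q$ itself cannot see $\Pi$, the segment $\overline{qr}$ acts as a separating obstacle's extension: I would show that any path inside $P$ from $p$ to a point of $\Pi$ must cross the line through $q$ and $r$ on the far side, and then derive that $s$ must lie in the region "behind" $r$ relative to $q$. The cleanest way to package this is to note that the extension of $\overline{qr}$ beyond $r$ is exactly (a sub-segment of) one of the cuts generated at the reflex vertex $r$; call this cut $C$, with $q \in Pocket(C)$ roughly speaking. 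Because $\Pi$ intersects all nonredundant cuts, and $C$ (or the nonredundant cut dominating it) is intersected by $\Pi$, there is a point of $\Pi$ in $Pocket(C)$ — but I then want to show that from inside $Pocket(C)$ together with the visibility of $p$, the point $q$ must actually be visible, contradicting the choice of $q$.

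Let me restate the crux more carefully, since that last leap is the main obstacle. The real content is: if $p$ is visible from $\Pi$ and $p$ lies on the ray from $r$ through $q$ extended to the boundary, then $q$ is visible from $\Pi$ as well. To prove this I would use that the segment $\overline{qr}$ extended is a chord with no polygon vertices in its relative interior except $r$; the visibility segment $\overline{ps}$ starts at $p$ which is on one side, and by a continuity/sweeping argument — rotating or sliding within the triangle-like region bounded by $\overline{pr}$, $\overline{ps}$, and a sub-path of $\Pi$ or $\partial P$ — one finds a point on $\Pi$ whose visibility segment to $p$ can be "pivoted" to reach $q$. Equivalently, one shows the funnel from $q$ to $\Pi$ cannot have $r$ as an apex vertex if $p$ is already visible, because $r$ being reflex with the cut structure forces $Pocket(C)$ to be separated from $q$'s neighborhood by $\partial P$ near $r$, which contradicts $q \in Pocket(C)$ containing the free segment $\overline{qr}$.

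The step I expect to be the main obstacle is precisely this pivoting/funnel argument: making rigorous the claim that visibility of the boundary point $p$ (the second intersection of the line $qr$ with $\partial P$) propagates to visibility of the interior point $q$. The necessary ingredients — shortest paths in simple polygons are reflex chains, the first segment of $\pi_E(q,\Pi)$ is unique when $q$ is invisible, and each reflex vertex's cut separates its pocket — are all available, but assembling them into a contradiction without hand-waving about the geometry of the region near $r$ will require care. I would likely need an auxiliary claim that the triangle $\triangle pqr$ (which lies in $P$ since $\overline{pr}\subseteq P$ and is a segment, degenerate triangle — so really the segment and a thin neighborhood) together with the visibility certificate for $p$ confines $\Pi$ to a region from which $q$ is forced to be visible.
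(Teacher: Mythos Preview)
Your overall plan coincides with the paper's proof: assume an unseen interior point $q$, take the first segment $\overline{qr}$ of $\pi_E(q,\Pi)$ with $r$ reflex, extend it past $q$ to a boundary point $p$, and argue that $p$ is unseen as well. The paper dispatches the key implication (``if $p$ were seen then $q$ would be seen'') in a single clause without further justification, so you are right to flag this as the substantive step.

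Where your proposal goes wrong is in the tools you reach for to justify that step. The extension of $\overline{qr}$ past $r$ is \emph{not} one of the cuts of $P$ as defined in the paper --- cuts are extensions of polygon \emph{edges} at a reflex vertex, not of an arbitrary segment $\overline{qr}$ --- so the $Pocket(C)$ language does not attach here. More seriously, you invoke the hypothesis that $\Pi$ intersects all nonredundant cuts; that belongs to Theorem~\ref{Main} and Lemma~\ref{Lemma_Two}, whereas Lemma~\ref{Lemma_One} assumes only that $\Pi$ sees $\partial P$, so that line of argument is simply unavailable. The clean justification uses the visibility polygon of $q$ instead: since $\Pi$ is connected and disjoint from it, $\Pi$ lies in a single pocket behind one window $W$, and the standard shortest-path structure forces the reflex endpoint of $W$ to be exactly your vertex $r$, with $W$ lying along the ray from $q$ through $r$ (hence on the line through $p,q,r$). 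Any visibility segment $\overline{ps}$ with $s\in\Pi$ must cross $W$ to reach the pocket; since both $p$ and that crossing point lie on this line, the entire segment $\overline{ps}$ lies on it and therefore contains $q$, so $s$ sees $q$ --- the desired contradiction. Your ``pivoting/sweeping'' sketch gestures in this direction but never isolates the collinearity of $W$ with $p$ and $q$, which is what makes the step immediate.
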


\begin{proof}
Let $x$ be an interior point of $P$ not seen by $\Pi$.
We show that there must be a boundary point that is also not seen by $\Pi$.
Consider the first segment $L$ of $\pi_E(x,\Pi)$.
It connects $x$ with a reflex vertex $v$ on $\partial P$. We extend $L$ away from $v$ to $x$ until it
hits $\partial P$ at $x'$. The point $x'$ is clearly not seen by $\Pi$ since if it were,
then $x$ would also be seen by $\Pi$.
\end{proof}

\begin{lemma}\label{Lemma_Two}
If a curved path $\Pi$ inside $P$ intersects all nonredundant cuts of $P$,
then for a nonredundant cut $C$ all vertices of $Pocket(C)$ must be visible from $\Pi$.
\end{lemma}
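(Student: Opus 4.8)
The plan is a proof by contradiction combined with induction on pockets ordered by strict inclusion (a well-founded order, since $P$ has only finitely many cuts). Suppose, for a nonredundant cut $C=\overline{vw}$ with generator $\overline{uv}$, that some vertex $z$ of $Pocket(C)$ is not visible from $\Pi$. Because $C$ is nonredundant, $\Pi$ meets $C$; fix a point $p\in\Pi\cap C$. Then $p$ lies on $\partial Pocket(C)$, so it is convenient to work inside the simple polygon $Pocket(C)$. If $z$ happens to be the free endpoint $w$ of $C$ (the only vertex of $Pocket(C)$ that need not be a vertex of $P$), then $z$ and $p$ both lie on the segment $C$, so $p$ sees $z$; hence we may assume $z$ is a genuine vertex of $P$, and in particular $z$ is not seen by $p$.

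For the base of the induction, if $p$ sees all of $Pocket(C)$ (in particular whenever $Pocket(C)$ is convex, e.g.\ when $Pocket(C)$ is minimal in the inclusion order) there is nothing to prove. For the inductive step, since $z$ is hidden from $p$ inside $Pocket(C)$ it lies behind some window of the visibility region of $p$, a window anchored at a reflex vertex $r$ of $Pocket(C)$; such an $r$ is then also a reflex vertex of $P$. Let $e$ be the edge of $P$ at $r$ lying on the side of this window away from $p$, and let $C'$ be the cut of $P$ generated by $e$. I claim that $z\in Pocket(C')$ (the window at $r$ lies on the $Pocket(C')$-side of the line supporting $C'$, and the portion of $\partial Pocket(C)$ delimiting the hidden region leaves $r$ along $e$ and thereafter stays inside $P$) and that $Pocket(C')\subsetneq Pocket(C)$. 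If $C'$ is redundant one first replaces it by a nonredundant cut whose pocket contains $Pocket(C')$ and is still strictly contained in $Pocket(C)$; so we may assume $C'$ is nonredundant.

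Having reduced to a nonredundant cut $C'$ with $z\in Pocket(C')\subsetneq Pocket(C)$, the induction hypothesis gives that every vertex of $Pocket(C')$ is seen by $\Pi$. Since $z$ is a vertex of $P$ contained in $\overline{Pocket(C')}$, it is either a vertex of $Pocket(C')$, or the reflex apex generating $C'$ — and in the latter case it lies on $C'$, which $\Pi$ meets, so it is seen by $\Pi$ anyway. Either way $z$ is visible from $\Pi$, contradicting the assumption; this completes the induction and the proof.

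I expect the main obstacle to be the geometric heart of the inductive step: verifying that the cut $C'$ extracted from the window at $r$ has its pocket strictly contained in $Pocket(C)$ while still containing $z$. This is where the degenerate configurations live — the anchor $r$ coinciding with $u$ or with $w$, the cut $C'$ passing through a vertex of $P$, and above all the line supporting $C'$ meeting the cut $C$ itself (in which case $Pocket(C')$ need not sit inside $Pocket(C)$, and one must instead argue that $z$ would then already be visible from $p$, or reroute through a different cut). Pinning down this single ``peeling'' step rigorously, together with the passage from a possibly redundant $C'$ to a nonredundant one, is the bulk of the work; the surrounding induction is then routine.
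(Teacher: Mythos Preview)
Your induction cannot get off the ground, because the key claim of the inductive step contradicts the hypothesis that $C$ is nonredundant. You construct a cut $C'$ of $P$ and assert $Pocket(C')\subsetneq Pocket(C)$; but by definition the existence of \emph{any} cut whose pocket is properly contained in $Pocket(C)$ makes $C$ redundant. Hence no such $C'$ exists, and there is nothing to which the induction hypothesis applies. The replacement step has the same defect and is moreover oriented the wrong way: a redundant cut is one whose pocket properly \emph{contains} another pocket, so passing from a redundant $C'$ to a nonredundant cut shrinks the pocket (possibly losing $z$) rather than enlarging it; and in any case a nonredundant $C''$ with $Pocket(C'')\subsetneq Pocket(C)$ is again impossible. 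A corollary is that pockets of nonredundant cuts are pairwise incomparable under inclusion, so your ``well-founded order'' has no nontrivial chains among them and the base case is every case. (Relatedly, a nonredundant pocket that is inclusion-minimal need not be convex.)

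The paper exploits the very same obstruction, but as a contradiction rather than an induction: if some vertex $x$ of $Pocket(C)$ saw no point of $C$, the funnel from $x$ to $C$ would have an apex $a_x\neq x$, and the cut at $a_x$ would have pocket strictly inside $Pocket(C)$, contradicting nonredundancy. Thus every vertex $x$ sees a subinterval $\overline{I_{L_x}I_{R_x}}$ of $C$. The paper then takes the two cuts $C_{L_x},C_{R_x}$ at the first funnel vertices; these necessarily \emph{cross} $C$ (so their pockets are not contained in $Pocket(C)$), and a short case analysis shows that wherever $\Pi$ meets $C$ it must also reach the nonredundant cuts inside $Pocket(C_{L_x})$ or $Pocket(C_{R_x})$, forcing $\Pi$ to cross a segment through $x$. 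So the ``peeling to a smaller pocket'' you aim for is precisely what nonredundancy forbids; the actual argument replaces it with a single funnel-and-case analysis per vertex, with no induction.
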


\begin{proof}
Let $v_C$ and $w_C$ be the endpoints of $C$, and $\overline{u_Cv_C}$ be the generator of $Pocket(C)$.
We show that $C$ is (at least partially) visible from every vertex $x$ of $Pocket(C)$.
Consider the paths $\pi_E(x,v_C)$ and $\pi_E(x,w_C)$.
These paths create a \emph{funnel} $F_x$ with \emph{apex} $a_x$ (see Figure~\ref{fig:Pocket}) \cite {Lee_1984}.
If $x$ is not visible from any point of $C$, then $a_x \neq x$.
In this case, $a_x$ is a reflex vertex of $Pocket(C)$ and the proof is continued as follows.
Let $b_x$ be the vertex adjacent to $a_x$ such that $a_x$ is the common apex between
$F_x$ and $F_{b_x}$ (if such $b_x$ does not exist, we define $b_x=x$).
Consider the cut $C'$ generated by $\overline{b_xa_x}$.
Since each of the funnel paths is \emph{inward} convex, $Pocket(C') \subset Pocket(C)$.
Thus, $C$ would be redundant which is a contradiction.
A similar argument proves that $u_C$ and $w_C$ are convex in $Pocket(C)$, and
all vertices of $Pocket(C)$ lie on the same side of $\overline{u_Cw_C}$.

\begin{figure}
	\centering
	\includegraphics[width=11cm,keepaspectratio=true]{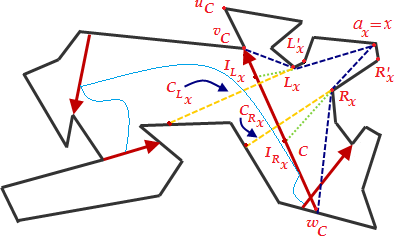}
	\caption{$Pocket(C)$ and its funnel for a vertex $x$, and path $\Pi$ (blue) intersecting all nonredundant cuts}
	\label{fig:Pocket}
\end{figure}

Consider the funnel $F_x$ with apex $a_x=x$.
Let $L_x \in F_x$ be the first vertex of $\pi_E(x,v_C)$ and
$R_x \in F_x$ be the first vertex of $\pi_E(x,w_C)$.
The extensions of $\overline{xL_x}$ and $\overline{xR_x}$
intersect $C$ at points $I_{L_x}$ and $I_{R_x}$, respectively
(if $I_{L_x}$ coincides with $I_{R_x}$, then $x$, $L_x$ and $R_x$ are collinear, and $a_x \neq x$).
Suppose that $L'_x$ is the vertex adjacent to $L_x$ such that
$L_x$ is the first vertex of $\pi_E(L'_x,v_C)$ ($L'_x$ can coincide with $x$).
Let $C_{L_x}$ be the cut generated by $\overline{L'_xL_x}$.
Similarly, $C_{R_x}$ is defined (see Figure~\ref{fig:Pocket}).

It is easy to see that
$C_{L_x}$ ($C_{R_x}$) always lies on the right (left)
of $\overline{L_xI_{L_x}}$ ($\overline{R_xI_{R_x}}$) or
it coincides with $\overline{L_xI_{L_x}}$ ($\overline{R_xI_{R_x}}$), i.e.,
$C_{L_x}$ and $C_{R_x}$ intersect $\overline{I_{L_x}w_C}$ and $\overline{I_{R_x}v_C}$, respectively
(note that in Figure~\ref{fig:Pocket}, the generators of $C_{L_x}$ and $C_{R_x}$ lie on the left and right sides of $x$, respectively).
With this property the following cases can occur:

\begin{enumerate}
\item [{$(a)$}] $\Pi$ intersects $\overline{I_{L_x}I_{R_x}}$ on $C$.
                Clearly, $x$ is visible from $\Pi$.
                
\item [{$(b)$}] $\Pi$ intersects $\overline{v_CI_{L_x}}$ on $C$.
                We will show that $\Pi$ must have an intersection with $C_{L_x}$
                and consequently with the extension of $\overline{xL_x}$.
                Thus, $x$ would be visible from $\Pi$.
                
\item [{$(c)$}] $\Pi$ intersects $\overline{I_{R_x}w_C}$ on $C$.
                We will show that $\Pi$ must have an intersection with $C_{R_x}$
                and consequently with the extension of $\overline{xR_x}$.
                Thus, $x$ would be visible from $\Pi$.
\end{enumerate}

In case $(b)$, if $C_{L_x}$ is nonredundant, then $\Pi$ must intersect it.
Otherwise, $Pocket(C_{L_x})$ contains at least one
$Pocket(C')$ for a nonredundant cut $C'$ and
$\Pi$ must go towards $C'$.
Therefore, $\Pi$ intersects $C_{L_x}$ (note that $P$ is a simple polygon).
The same result can be proved for the cut $C_{R_x}$ in case $(c)$.

For the \emph{degenerate} cases, i.e.,
if $x$ is visible from $v_C$ or $w_C$, we define $I_{L_x}=L_x=v_C$ or $I_{R_x}=R_x=w_C$, respectively,
only cases $(a)$, $(a)$ or $(b)$, or $(a)$ or $(c)$ can occur.
In these cases, we have at most one cut and a similar argument shows that
$x$ is visible from $\Pi$ (see Figure~\ref{fig:Degenerate} for three vertices $x$, $x'$ and $x''$).
\end{proof}

\begin{figure}
	\centering
	\includegraphics[width=7.5cm,keepaspectratio=true]{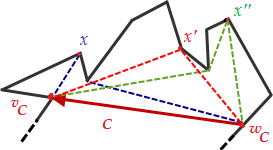}
	\caption{The degenerate cases for a nonredundant cut $C$}
	\label{fig:Degenerate}
\end{figure}

\begin{lemma}\label{Lemma_Three}
All vertices of $Q$ must be visible from $\Pi$, where $Q$ is the subpolygon of $P$ after cutting along all nonredundant cuts of $P$.
\end{lemma}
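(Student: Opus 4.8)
The plan is to classify the vertices of $Q$ and reduce each case to Lemma~\ref{Lemma_Two}. Every vertex $z$ of $Q$ is an endpoint $v_C$ or $w_C$ of a nonredundant cut $C$, a crossing point of two nonredundant cuts, or a vertex of $P$ lying on $\partial Q$. In the first two cases $z$ lies on some nonredundant cut $C$; since $\Pi$ meets $C$, say at a point $q$, the subsegment $\overline{zq}\subseteq C\subseteq P$ shows that $q$ sees $z$. In the third case, if $z$ lies in $\overline{Pocket(C)}$ for some nonredundant cut $C$ then, $z$ being a vertex of $Q$ and hence not interior to $Pocket(C)$, it is a vertex of the polygon $Pocket(C)$, so Lemma~\ref{Lemma_Two} applies.

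Thus the statement comes down to an \emph{exposed} vertex $z$: a vertex of $P$ on $\partial Q$ that lies on no nonredundant cut and in the closure of no nonredundant pocket (for instance, a corner of a comb's handle far from every tooth). For such a $z$ I would argue by contradiction, in the spirit of the proofs of Lemmas~\ref{Lemma_One} and~\ref{Lemma_Two}. Suppose $\Pi$ does not see $z$. Then $\Pi$ misses the visibility polygon of $z$ and, being connected, lies inside a single shadow component $S$ of the complement; $S$ sits behind a window at a reflex vertex $r$ of $P$ that $z$ sees --- equivalently, $r$ is the far endpoint of the first segment of $\pi_E(z,\Pi)$, which is exactly the situation in which Lemma~\ref{Lemma_Two} invokes the cut generated by $\overline{b_xa_x}$ --- and the window yields a cut $C$ with $\Pi\subseteq S\subseteq\overline{Pocket(C)}$ while $z\notin\overline{Pocket(C)}$. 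If every nonredundant cut lies inside $\overline{Pocket(C)}$, then, using that no nonredundant pocket lies strictly inside another pocket, $P\setminus Pocket(C)$ has no reflex vertex other than the (flat or convex) endpoints of $C$ and is therefore convex, so the point at which $\Pi$ meets $C$ already sees $z$, a contradiction. Otherwise some nonredundant cut leaves $\overline{Pocket(C)}$, and one repeats the shadow argument inside the strictly smaller simple polygon $Pocket(C)$ --- passing to a nonredundant cut whose pocket lies in $\overline{Pocket(C)}$ when $C$ itself is redundant --- a descent that, since the polygons strictly shrink, terminates at a nonredundant cut $C^{*}$ together with a vertex of $Pocket(C^{*})$ unseen by $\Pi$, contradicting Lemma~\ref{Lemma_Two}. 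In either case $\Pi$ sees $z$, and the proof is complete.

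I expect the exposed-vertex case to be the only real obstacle: one must pin down which cut the shadow of $z$ hides behind, verify the side conditions ($z$ outside that pocket while $\Pi$ is trapped inside its closure), and make the descent to a nonredundant cut terminate cleanly. The rest --- the trichotomy for the vertices of $Q$, the one-line visibility of cut endpoints and crossing points, and the appeal to Lemma~\ref{Lemma_Two} for vertices of $P$ inside a pocket --- is routine.
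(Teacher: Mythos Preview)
Your trichotomy for the vertices of $Q$ and the dispatch of cut-endpoints, crossings, and pocket-vertices via Lemma~\ref{Lemma_Two} are fine. The exposed-vertex argument, however, has genuine gaps beyond details to ``pin down.'' The most serious is in your case~(a): you invoke ``the point at which $\Pi$ meets $C$,'' but nothing guarantees that $\Pi$ meets $C$. The hypothesis only gives $\Pi$ meeting every \emph{nonredundant} cut, and the cut $C$ you produce at the blocking reflex vertex $r$ may well be redundant; then $\Pi$ can sit strictly inside $Pocket(C)$, and the convexity of $P\setminus Pocket(C)$ tells you nothing about whether an interior point of $Pocket(C)$ sees $z$. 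The descent in case~(b) inherits the same difficulty: passing to a nonredundant $C'$ with $Pocket(C')\subsetneq Pocket(C)$ discards the containment $\Pi\subseteq\overline{Pocket(C')}$, so the inductive setup is not preserved, and you never exhibit the unseen vertex of $Pocket(C^{*})$ that would contradict Lemma~\ref{Lemma_Two}. Even the preliminary inclusion $S\subseteq\overline{Pocket(C)}$ needs an argument, since the window bounding $S$ runs along the ray from $z$ through $r$ while $C$ runs along an edge direction of $P$ at $r$; these are different chords through $r$, and the analogy with the $\overline{b_xa_x}$ cut in Lemma~\ref{Lemma_Two} (which served a different purpose there) does not supply the containment.

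The paper's proof is organized quite differently and sidesteps these issues. Rather than fixing an unseen vertex and chasing its shadow, it fixes a reflex vertex $v_i$ of $Q$, takes the two cuts $l_{i1},l_{i2}$ of $P$ at $v_i$ (both necessarily redundant in $P$), and splits on whether $Pocket(l_{ij})$ contains every nonredundant pocket or only some. In the latter case connectivity of $\Pi$ forces it across $l_{ij}$ (nonredundant pockets lie on both sides), or, when $l_{ij}$ meets a nonredundant cut, the funnel argument of Lemma~\ref{Lemma_Two} is rerun with the neighbor $t_{ij}$ playing the role of $x$; in the former case a short contradiction shows $t_{ij}$ is visible by producing a would-be nonredundant cut of $P$ inside $Q$. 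At no point does the paper need to trap $\Pi$ inside a single pocket.
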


\begin{proof}
Since for every nonredundant cut $C$ of $P$ all vertices of $Pocket(C)$ lie on the same side of $C$
(proof of Lemma~\ref{Lemma_Two}),
such cutting is always possible
(If this cutting leads to an empty set, the proof is completed).
Clearly, $Q$ is a simple polygon.
If all vertices of $Q$ are convex ($Q$ is convex), they are visible from $\Pi$.
Thus, we suppose that $Q$ has at least one reflex vertex $v_i$.
Consider the two cuts $l_{i1}$ and $l_{i2}$ generated by $\overline{t_{i1}v_i}$ and $\overline{t_{i2}v_i}$,
respectively ($t_{i1}$ and $t_{i2}$ are the adjacent vertices of $v_i$).
Since $Q$ is a simple polygon and $\Pi$ is connected,
it is sufficient to show that $\Pi$ intersects $l_{ij}$ ($j=1$ or $2$),
or $t_{i1}$ and $t_{i2}$ are visible from $\Pi$ for all indices $i$ of reflex vertices of $Q$.

Let $S_P$ be the set of $Pocket(C)$ for every nonredundant cut $C$ of $P$.
Also, $S_P(l_{ij})$ is defined as the set of $P_k \in S_P$ such that
$P_k \subset Pocket(l_{ij})$ ($k=1,...,\lvert S_P \rvert$ and $j=1,2$).
For example, in Figure~\ref{fig:Internalreflex},
$S_P=\{P_1,P_2,P_3,P_4,P_5,P_6\}$,
while
$S_P(l_{q1})=\{P_1,P_2,P_3\}$, $S_P(l_{q2})=\{P_2,P_3,P_4,P_5,P_6\}$
and
$S_P(l_{r1})=\{P_1,P_2,P_3,P_4,P_6\}$, $S_P(l_{r2})=\{P_5\}$ for the reflex vertices $v_q$ and $v_r$, respectively.
Obviously, for every cut $C'$ of $Q$,
$Pocket(C')$ contains at least one member of $S_P$, i.e.,
all cuts of $Q$ are redundant in $P$.

\begin{figure}
	\centering
	\includegraphics[width=11.5cm,keepaspectratio=true]{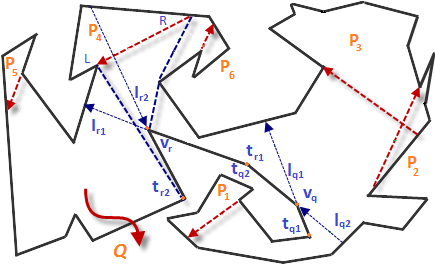}
	\caption{Cutting $P$ along all dashed nonredundant cuts yields $Q$;
	         $\pi_E(t_{r2},L)$ and $\pi_E(t_{r2},R)$ create the funnel $F_{t_{r2}}$}
	\label{fig:Internalreflex}
\end{figure}

If $S_P(l_{ij})=S_P$ ($j=1$ or $2$), then we show that $t_{ij}$ must be visible from $\Pi$.
Suppose that $t_{ij}$ is not visible from $\Pi$.
Let $z_j$ be the first vertex of $\pi_E(t_{ij},\Pi)$.
Clearly, $z_j$ would be a reflex vertex of $Q$ (if $z_j \in (P-Q)$, then $\Pi$ must be seen by $t_{ij}$).
We extend $\overline{t_{ij}z_j}$ away from $t_{ij}$ to $z_j$ until it hits $\partial P$ at $z'_j$.
The line segment $\overline{z_jz'_j}$ divides $P$ into two subpolygons, only one of which contains $t_{ij}$.
We define $T$ as the subpolygon containing $t_{ij}$.
Let $y_j$ be the adjacent vertex of $z_j$ in $T$ ($y_j \in Q$ can coincide with $t_{ij}$).
It is easy to see that there is no $P_k \in S_P$ such that $P_k \subset T$
(note that $S_P-S_P(l_{ij})=\emptyset$),
since otherwise, $\Pi$ intersects $\overline{z_jz'_j}$.
Let $C'$ be the cut of $Q$ generated by $\overline{y_jz_j}$.
Clearly, $Pocket(C') \subseteq T$.
Thus, $C'$ would be nonredundant in $P$, contradicting that all cuts of $Q$ are redundant in $P$.

Otherwise, $S_P(l_{ij}) \subset S_P$ ($j=1$ or $2$).
Since $l_{i1}$ and $l_{i2}$ are redundant cuts in $P$, there are $P_m \in S_P$ and $P_n \in S_P$ such that
$P_m \in S_P(l_{i1})$ and $P_n \in S_P(l_{i2})$ ($P_m$ and $P_n$ might coincide or not).

Suppose that $l_{ij}$ does not intersect any nonredundant cuts of $P$.
Therefore, there is a $P_o \in (S_P-S_P(l_{ij}))$, and
the portion of $\Pi$ that connects $P_m$ to $P_o$ or $P_n$ to $P_o$ must intersect $l_{ij}$
($P_m \neq P_o$ and $P_n \neq P_o$).

Now, suppose that $l_{ij}$ intersects a nonredundant cut $C$ of $P$.
Since $t_{ij}$ is visible from $C$, the same result in Lemma~\ref{Lemma_Two} can be proved
for the funnel $F_{t_{ij}}$ on the other side of $C$ ($x=t_{ij}$).
Thus, $t_{ij}$ would be visible from $\Pi$.
For example, in Figure~\ref{fig:Internalreflex},
there is the funnel $F_{t_{r2}}$ outside $P_4$.
\end{proof}

Therefore, all of $\partial P$ is visible from $\Pi$ (Lemma~\ref{Lemma_Two} and Lemma~\ref{Lemma_Three}) and according to
Lemma~\ref{Lemma_One},
$\Pi$ would be a visibility path.
This completes the proof of Theorem~\ref{Main}.
\end{proof}

\section{Future work}
It would be of interest to find a necessary and sufficient condition for other visibility models,
like conic visibility in simple polygons \cite {Garcia_2000}.
This is a generalization of the straight line visibility.
The anticipated result would have important applications in various areas such as:
the hidden surface removal problem in CAD/CAM, robot motion planning, and related computational geometry applications.

\section*{Acknowledgements}
We sincerely thank
Dr Jorg-Rudiger Sack from Carleton University
and Dr Ali Rajaei from Tarbiat Modares University, Computer Sciences group,
for their kind help and valuable comments which led to an improvement of this work.

\end{document}